%%%%%%%%%%%%%%%%%%%%%%% file paper.tex %%%%%%%%%%%%%%%%%%%%%%%%%
%
% This is the LaTeX source for the instructions to authors using
% the LaTeX document class 'llncs.cls' for contributions to
% the Lecture Notes in Computer Sciences series.
% http://www.springer.com/lncs       Springer Heidelberg 2006/05/04
%
% It may be used as a template for your own input - copy it
% to a new file with a new name and use it as the basis
% for your article.
%
% NB: the document class 'llncs' has its own and detailed documentation, see
% ftp://ftp.springer.de/data/pubftp/pub/tex/latex/llncs/latex2e/llncsdoc.pdf
%
%%%%%%%%%%%%%%%%%%%%%%%%%%%%%%%%%%%%%%%%%%%%%%%%%%%%%%%%%%%%%%%%%%%

\documentclass[runningheads,a4paper]{llncs}

\usepackage{amssymb}
\setcounter{tocdepth}{3}
\usepackage{graphicx}
\usepackage{epstopdf}
\epstopdfsetup{update}
\usepackage{algpseudocode}
\usepackage{algorithm}
\usepackage{placeins}
% to be used with this at relevant place: \FloatBarrier 

\usepackage{amsmath,amsfonts,amsthm,amssymb}

\usepackage{xcolor,colortbl}

\definecolor{Gray}{gray}{0.85}
\definecolor{LightCyan}{rgb}{0.88,1,1}

\newcolumntype{a}{>{\columncolor{Gray}}c}
\newcolumntype{b}{>{\columncolor{white}}c}
\usepackage{url}
\urldef{\mailsa}\path|asad.khan@seecs.edu.pk|
%\urldef{\mailsb}\path|fauzan.mirza,|
%\urldef{\mailsc}\path|amir.ali}@seecs.nust.edu.pk|    
\newcommand{\keywords}[1]{\par\addvspace\baselineskip
\noindent\keywordname\enspace\ignorespaces#1 }

\begin{document}

\mainmatter  % start of an individual contribution

% first the title is needed
\title{Transform Domain Analysis of Sequences}

% a short form should be given in case it is too long for the running head
\titlerunning{Transform Domain Analysis of  Sequences}

% the name(s) of the author(s) follow(s) next
%
% NB: Chinese authors should write their first names(s) in front of
% their surnames. This ensures that the names appear correctly in
% the running heads and the author index.
%
\author{Muhammad Asad Khan%
%\thanks{PhD student at SEECS.}%
\and Fauzan Mirza\and Amir Ali Khan}
%$^1$
\authorrunning{M Asad}
\authorrunning{M Asad, Fauzan Mirza, Amir A Khan}
% (feature abused for this document to repeat the title also on left hand pages)

% the affiliations are given next; don't give your e-mail address
% unless you accept that it will be published
\institute{National University of Sciences and Technology,\\
Islamabad, Pakistan\\
\mailsa\\}
%\mailsb\\
%\mailsc\\
%\url{http://www.nust.edu.pk}}
%\and Royal Holloway, University of London\\
%UK\\
%\mailsb\\
%\url{http://www.dcs.rhbnc.ac.uk}
  % }
%
% NB: a more complex sample for affiliations and the mapping to the
% corresponding authors can be found in the file "llncs.dem"
% (search for the string "\mainmatter" where a contribution starts).
% "llncs.dem" accompanies the document class "llncs.cls".
%

\toctitle{Lecture Notes in Computer Science}
\tocauthor{Authors' Instructions}
\maketitle

\begin{abstract}
In cryptanalysis, security of ciphers vis-a-vis attacks is gauged against three criteria of complexities, i.e., computations, memory and time. Some features may not be so apparent in a particular domain, and their analysis in a transformed domain often reveals interesting patterns. Moreover, the complexity criteria in different domains are different and performance improvements are often achieved by transforming the problem in an alternate domain. 
Owing to the results of coding theory and signal processing, Discrete Fourier Transform (DFT) based attacks have proven to be efficient than algebraic attacks in terms of their computational complexity. Motivated by DFT based attacks, we present a transformed domain analysis of Linear Feedback Shift Register(LFSR) based sequence generators. The time and frequency domain behavior of non-linear filter and combiner generators is discussed along with some novel observations based on the Chinese Remainder Theorem (CRT). CRT is exploited to establish patterns in LFSR sequences and underlying cyclic structures of finite fields. Application of DFT spectra attacks on combiner generators is also demonstrated. Our proposed method saves on the last stage computations of selective DFT attacks for combiner generators. The proposed approach is demonstrated on some examples of combiner generators and is scalable to general configuration of combiner generators.
\keywords{LFSR, DFT, filter generator, combiner generator, CRT.}
\end{abstract}

\section{Introduction}
LFSRs have been widely used for sequence generation due to their inbuilt recursive structure, faster implementations and well studied behaviour in diverse applications of communications, coding theory and cryptology. In cryptographic algorithms, the linear recurrence of LFSRs is modified by nonlinear filtering to
achieve higher linear complexities and good statistical properties. Some of the
classical schemes include filter generators, combiner generators, clock controlled
generators and shrinking generators. The nonlinear function used in these generators is a boolean function $f\; :  \; GF(2^n) \rightarrow GF(2)$ which takes $l$ inputs from either an LFSR or from outputs of number of LFSRs and produces $ GF (2)$ keystream sequence. LFSR based sequence generators serve as the basic building blocks for a number of e-stream submissions~\cite{estream} and cellular algorithms i.e. A5
series~\cite{cellular} and E0 algorithm~\cite{bluetoothspecification}.

In many scientific applications, it is common to analyze a problem in different domains because some characteristics of the involved parameters can only be better revealed in a particular domain. Moreover, the computational requirements can be reduced as result of the analysis in some transformed domains. For instance, the prorogation
characteristics of light or magnetic waves can be conveniently observed in frequency domain as compared to its equilvalent representation in time domain. A continous signal in time domain   $s(t) = cos (2\pi f_ct)$ having infinite non-zero points once converted to frequency domain $S(f)$ has only two non-zero frequency components i.e. $f_c$ and $-f_c$. Similarly decoding of Block codes like  Bose, Chaudhuri, and Hocquenghem (BCH) codes and Reed Solomon codes in frequency domain have proven to be more efficient in terms of computational complexity and error free recovery as compared to its analogous time
domain decoding techniques~\cite{peter2002error}. Contemplating the same, analysis of cryptographic primitives in transform domains has also produced promising results. Though very less known in public literature, DFT based spectral attacks ~\cite{gong2011fast} and transform domain analysis of DES cipher components ~\cite{gong1999transform} are interesting examples in this regard.
 
In this report, analysis of LFSR based sequence generators in time and frequency domains has been presented. Starting form time and frequency domain analysis of basic LFSR sequences, we build our analysis onto filter and combiner generators. In Section-2, basics of Fourier transform over finite fields is recalled. Section-3 delineates the time and frequency domain analysis of LFSRs. Section-4 describes transform domain analysis of a simple product sequence which is fundamental component of non-linear boolean functions. A novel account of Chinese Remainder Theorem (CRT) based interpretation of fixed patterns in cyclic structures of underlying finite fields is discussed in this section. In Section-5, time and frequency domain analysis of non-linear filter generators is given with a perspective of their application in cryptographic algorithms. Specific comments about selective DFT attacks on filter generators~\cite{gong2011fast} are specifically made in this section. Section-6 discusses transform domain analysis of combiner generators and application of selective DFT attacks on combiner generators by exploiting modular computations of CRT alongwith detailed complexity comparison in relation to classical divide and conquer attacks. In Section-7 discussion on applicability of discrete fourier spectra attacks on A5/1 algorithm is made. The report is finally concluded in section-8. 
\section{Frequency Domain Representation over Finite Fields}
Discrete Fourier Transform (DFT) is considered one of the most important discovery in the area of  signal processing. DFT presents us with an alternate mathematical tool that allows us to examine the frequency domain behaviour of signals, often revealing important information not apparent in time domain. DFT $S_k$ of an n-point sequence $s_i$ is expressed in terms of inner product between the sequence and set of complex discrete frequency exponentials:

\begin{equation}\label{DFT-dsp eq}
  S_{k} = \sum_{i=0}^{n-1} s_{i}e^{-j2\pi ik/n} , \;\;\;  k = 0,1,2,.....,n-1
\end{equation}
 The term $e^{-j2\pi ik/n}$ represents discrete set of exponentials. Alternatively, $e^{-j2\pi/n}$ can be viewed as $n^{th}$ root of unity.  

%\end{definition}

Analogous to the classical DFT, a DFT for a periodic signal $s_t$ with period $n$ defined over a finite field $GF(2^m) $  is represented as

\begin{equation}\label{DFT eq}
  S_{k} = \sum_{t=0}^{n-1} s_{t}\alpha^{-tk} , \;\;\;  k = 0,1,2,.....,n-1
\end{equation}
%\end{definition}
 where $S_k$ is $k$-th frequency component of DFT and $\alpha$ is the primitive element; generator of $GF(2^m) $ with period $n$ ~\cite{pollard1971fast}. For Inverse DFT, we will have a relation 
\begin{equation}\label{IDFT eq}
  s_t = \sum_{k=0}^{n-1} S_{k}\alpha^{tk} , \;\;\;  k = 0,1,2,.....,n-1
\end{equation}

Similarly for polynomials, we have a relation for DFT and IDFT. Having a correspondence between a minimum polynomial and its associated sequence $s_t$ with $s(x) = \sum_{t=0}^{n-1}s_{t} x^{t}$ and $S(x) = \sum_{k=0}^{n-1}S_{k} x^{k}$, following relation holds for DFT~\cite{golomb2005signal}: 
\begin{equation}\label{DFT eq-poly}
  S_{k} = s(\alpha^{-k}), \;\;\;  k = 0,1,2,.....,n-1
\end{equation}
and similarly for IDFT:
\begin{equation}\label{DFT eq-poly}
  s_t = S(\alpha^{t}), \;\;\;  t = 0,1,2,.....,n-1
\end{equation}

\section{Transformed Domain Analysis of LFSR Sequences}
Classical theory on LFSR sequences and their applications in cryptology can be found in ~\cite{golomb1982shift}, ~\cite{golomb2005signal} and ~\cite{rueppel1986analysis}. In this section, transformed domain analysis of LFSRs, their sequences and underlying algebraic structures are recalled as they are fundamental to our proposed approach on filter and combiner generators.

\subsection{Time Domain Representation of an LFSR Sequence} 
 A binary sequence $s_t$ can be treated as an LFSR sequence of degree $m$ if it follows a linear recursion with coefficents from $GF(2)$ as: 
\begin{equation}\label{eq:recur}
s_{i+m} = \sum_{k=0}^{m-1} c_{k}s_{i+k}\;\;\;\;\;\mbox{for}\;\;i \geq 0
\end{equation}
The value $m$ is called the order of recurrence and associated characteristic polynomial in $GF(2)$ is defined by
\begin{equation}\label{eq:charpoly}
f(x) = x^m + \sum_{k=0}^{m-1} c_{k}s_{i+k}\;\;\;\;\;\mbox{for}\;\;i \geq 0
\end{equation}

The initial state $(s_0, s_1, .... , s_{m-1})$ of an LFSR serves as a key to generate the complete sequence $s_t$ . The period of any non-zero sequence can be utmost $2^m - 1$ which is in relation to the characteristic polynomial $f(x)$ of the LFSR. If $f(x)$ is irreducible, it has $m$ distint roots i.e. $\alpha$ and its conjugate set $\{\alpha, \alpha^2, \alpha^4,..., \alpha^{2^{m-1}}\}$. Consequently, if $f(x)$ is a primitive polynomial, then order of its root $\alpha$ must be $2^m - 1$ which in other words is a period of associated sequence $s_{t}$. Thus a sequence $s_t$ of an LFSR given by a primitive polynomial has maximum possible period $2^m - 1$ and is called $m$-sequence. \\

\textbf{Trace Representation of an LFSR Sequence}.\;\;\; 
The same sequence $s_t$ can also be expressed in terms of its trace representation~\cite{ronjom2007attacks}; a linear operator from $GF(2^m)$ to its subfiled $GF(2)$ . Let $Tr_{1}^{m}(x) = \sum_{k=0}^{m-1} {x^{2}}^{k}$ be the trace mapping from $GF(2^m)$ to $GF(2)$, then $m$ sequence $s_t$ can be represented as:
\begin{equation}\label{eq:trace}
s_t = Tr_{1}^{m} (\beta \alpha^t)
\end{equation}
\par where $\alpha$ is a generator of a cyclic group $GF(2^{m})^*$ and is called as primitive element of $GF(2^{m})$. Note that $\beta \in GF(2^{m})$ and each of its nonzero value corresponds to cyclic shift of the $m$-sequence generated by an LFSR with primitive polynomial $f(x)$. Importance of this interpretation of $m$-sequence is that different sequences constructed from root $\alpha$ of primitive polynomial $f(x)$ are cyclic shifts of the same $m$-sequence. The associated linear space $G(f)$ of dimension $m$ contains $2^{m}$ different binary sequences including all 0s sequence as:
 \begin{equation}\label{eq:gf}
G(f) = \{\; \tau^{i}s \; \vert \; 0\; \leq \;i \;\leq \;2^m - 2\; \} \bigcup \{0\}   
\end{equation}
 where $\tau$ is a left shift operator and represents a linear transformation of sequence $s_t$. It is important to mention here that all sequences in $G(f)$, defined over a primitive polynomial $f(x)$, have maximum period $r$ i.e. $2^{m}-1$ with an obvious exception of all 0s sequence. Moreover, any two sequences $s$ and $u$ within $G(f)$ are cyclic shift equilvalent, if there exists an integer $k$ such that
\begin{equation}\label{eq:shift}
u_i = s_{i+k},\;\;\;\;\;\;\; \forall \; i \; \geq 0.
\end{equation}

\textbf{LFSR Sequence in Matrix Form}.\;\;
Each state of an $m$ stage LFSR is a vector in the $m$-dimensional space $GF(2^{m})$. The shift register is then a linear operator that changes the current state to its successor vector according to the feedback. In simple terms, the transformation of each non-zero sequence in a field, from state
$ (s_{k},s_{k+1}, ....,s_{k+m-1})$  to its successor state
 $ (s_{k+1},s_{k+2}, ....,s_{k+m})$
 can be regarded as a linear operation on  $GF(2^{m})$. The advantage of working with operator operating on $m-$dimensional vector space is that it affords a matrix representation. Since, 
\begin{equation}\label{eq:mat-LFSR}
 s_{m+k} = c_{0}s_{k}+ c_{1}s_{k+1}+...+c_{m-1}s_{k+m-1} , k \geq 0.
\end{equation}

 hence a shift register matrix takes the form:

\[ T = \left[ \begin{array}{cccccc}
0 & 0 & 0 & ... & 0 & c_{0} \\
1 & 0 & 0 & ... & 0 & c_{1} \\
0 & 1 & 0 & ... & 0 & c_{2}\\
. & . & . & ... & . & . \\
. & . & . & ... & . & . \\
. & . & . & ... & . & . \\
0 & 0 & 0 & ... & 1 & c_{m-1} \end{array} \right]\]

 and

\[ \begin{array} {lcl} (s_{k+1},s_{k+2}, ....,s_{k+m}) & =  & (s_{k},s_{k+1}, ....,s_{k+m-1})T \\
 &= & (s_{k-1},s_{k}, ....,s_{k-1+m-1})T^{2} \\
& = & ...  \\
& = & (s_{0},s_{1}, ....,s_{m-1})T^{k+1}
 \end{array} \]

The matrix T is called the state matrix of the LFSR in time domain. Note that
 $det(T) = (-1)^{m}c_{0} $. Thus T is invertible if and only if $c_{0} \neq 0 $.

\subsection{Frequency Domain Representation of an LFSR Sequence}
 From \eqref{DFT eq}, DFT of an LFSR sequence $a_t$ defined over a primitive polynomial $f(x)$ produces a Fourier spectrum sequence $A_k$. For completeness, few important facts are reproduced here from ~\cite{Blahut1983errorcontrol} and ~\cite{massey1994fourier} with some novel observations as well:
 \begin{enumerate}
   \item The zero components in the Fourier spectrum of a sequence over $GF(2^m)$ are related to the roots of a polynomial of that sequence. For example, DFT of an LFSR sequence with feedback polynomial $f(x) = x^3+x+1$ initialized with state $001$ is $0, 0, 0, \alpha^4 , 0, \alpha^2 , \alpha$. As roots of $f(x)$ are $\alpha$ alongwith its conjugates i.e. $\alpha^2$ and $\alpha^4$, so first, second and fourth spectral components are zero.
   \item As DFT of a time domain signal comprises of a fundamental frequency and its harmonics, DFT of an LFSR sequence based on a minimal polynomial with no multiple roots also comprises of $\alpha^i \in GF(2^m)$  and its harmonics $\alpha^{i{^j} mod\mbox{\;} r} \in GF(2^m)$ with $0\leq i \leq r-1$. This harmonic pattern can be efficiently exploited in cryptanalysis attacks on LFSR based sequence generators.
	\item All DFT components of an LFSR sequence $\in GF(2^m)$.   
   \item indices of non zero DFT points for LFSR with minimum polynomial and no multiple roots also follow a fixed pattern. If $k$-th component of spectral sequence is non zero then all $(2^{j}k)\; mod\; r$ components will be harmonics of the $k$-th component where  $1\leq j \leq m-1$.
   
   \item DFT of an LFSR sequence based on a polynomial with multiple roots does not contain harmonic pattern of elements.
   
   \item The Linear Complexity $L$ of an n-periodic sequence is equal to Hamming weight of its frequency domain associate. Three non zero spectral components in example above verifies this fact.
    
	\item \textbf{Time Shift Property}.  Let two sequences related by a time shift $u_{t} = s_{t+\tau}$, their DFTs $U_{k}$ and $S_{k}$ are related as:
\begin{equation}\label{DFT shift eq}
  U_{k} = \alpha^{k\tau} S_{k}, \;\;\;\;\;   k = 0,1,....,n-1
\end{equation}
\item indices of non-zero spectral points of an LFSR sequence does not change with the shift in LFSR sequence. A non-zero $k$-th component of DFT of an LFSR sequence will always be non-zero. Any shift in LFSR sequence will only change the value at this component by~(\ref{DFT shift eq}). Converse is also true for zero spectral points of an LFSR sequence which will always be zero no matter how much sequence is shifted.
\item \textbf{Trace Representation of an LFSR sequence in Frequency Domain}.\;\;
A binary sequence $s_t$ can be represented in terms of trace function with spectral componenets as follows:-

\begin{equation}\label{eq:trace in DFT}
s_t = \sum_{j\in \Gamma(n)} Tr_{1}^{m_j} (A_j \alpha^{-jt}),\;\;\;\;\;t = 0,1,....,n-1
\end{equation}
 where $Tr_{1}^{m_j}$ is a trace function from $F_{2^m}$ to $F_2$, $A_j\in F_{2^m}$ and $\Gamma(n)$ is a set of  cyclotomic coset leaders modulo $n$. 
\item \textbf{Matrix Representation in Frequency Domain}.\;\;
DFT of $s_t$, being a linear operator with respect to $\alpha\in GF(2^{m})$ from equation \ref{DFT eq}, can be written in matrix form as:
 \begin{equation}\label{DFT Matrix}
   (S_{0}, S_{1}, S_{2},..., S_{n-1})^T = D (s_{0}, s_{1}, s_{2},..., s_{n-1})^T
 \end{equation}
 where
  \[ D = \left[ \begin{array}{ccccc}
1 & 1 & 1 & ... & 1  \\
1 & \alpha & \alpha^{2} & ... & \alpha^{n-1}  \\
1 & \alpha^{2} & \alpha^{2.2} & ... & \alpha^{2(n-1)} \\
. & . & . & ... & .  \\
. & . & . & ... & .  \\
. & . & . & ... & .  \\
1 & \alpha^{n-1} & \alpha^{2(n-1)} & ... & \alpha^{n-1(n-1)}  \end{array} \right]\]
\end{enumerate}

%\end{enumerate}
% ......................................................................................
%  				    Section 2 Ends Here  
% ...................................................................................... 
% ......................................................................................
%  				    Section 3 starts here
% 					Transformed Domain Analysis of Product Sequences 
% ......................................................................................
\section{Transform Domain Analysis of a Product Sequence}

In this section, transform domain analysis of a product sequence generated through multiplication of two LFSR sequences is presented. This analysis forms the basis for the combiner and filter generators that will appear in subsequent sections. The spectral domain features discussed in Section-3.2 hold true for product sequences as well. In addition, a linear structure existing in the frequency domain representation of the product sequence is presented which renders itself useful for cryptanalysis of LFSR based sequence generators.
The initial state of an LFSR has direct relevance to the element $\beta \in \; GF(2^m)$ in (\ref{eq:trace}) which has been extensively exploited in algebraic and DFT based spectra attacks~\cite{gong2011fast}. Akin to this, there exists another phenemenon which has one to one correspondence with initial states of LFSRs within a linear space $G(f)$ containing cyclic shifts of $m$ sequences. These cyclic shifts in LFSRs sequences and their correspondence to maximum period $r$ posseses certain fixed patterns which exhibit linear behaviour even when employed in nonlinear combiner generators. We have observed that CRT interprets the shifts in LFSRs sequences  and is considered as our major contribution to classical theory of LFSRs and sequence generators. As the product of two LFSR sequences is a building block of any non linear boolean function, the idea has been discussed by considering a simple case of two $m$ sequences multiplied togather. The process has been generalized through mathematical rationale later in this subsection where CRT based interpretaion of shifts in LFSRs sequences has been discussed. 
\subsection{Time Domain Analysis of a Product Sequence}
 We build our analysis by starting with a simple case of  multiplication of output sequences of two LFSRs and illustrate  our novel observations on fixed structures existing in the frequency domian representation of product sequences. The observations of this special case will be generalized to a sequence generator in the next section.

 Let $s_{t}$ be a key stream generated by multiplying the two LFSRs sequences $a_{t}$ and $b_{t}$ defined as
 \begin{equation}\label{eq:twoLFSR1}
s_t = f(a_t,b_t)
\end{equation}
where $ f(.)$ is a nonlinear function representing a term wise product. If period of $a_t$ is $r_1$ and $b_t$ is $r_2$, we have 
 \begin{equation}\label{eq:twoLFSR}
s_t =   a_i\;.\; b_i\;\;\;\;\;\;\;\; \mbox{with} \;\; 0\leq \; i \; < r\;
\end{equation} 
where $r\; =\;\mbox{lcm}\;(r_1,r_2) $. The linear complexity $L$ of the sequence $s_t $ in this case satisfies 
\begin{equation}\label{eq:LC-product}
L (s_t) \leq L(a_t)L(b_t)
\end{equation}
where $L$ denotes linear complexity of a sequence and the equality in (\ref{eq:LC-product}) holds only iff associated polynomials of $a_t$ and $b_t$ are primitive and are greater than 2~\cite{golomb2005signal}.

\subsubsection{Fixed Patterns in Cyclic Structures of LFSRs}
It has been observed that there exists a specific relationship between the amount in shifts of product sequence $s_t$, period of individual LFSRs and shifts from their refernce initial states. We generalize the process by giving its mathematical rationale followed by detailed discussion through a small example. CRT allows mathematical representation of relationship observed between the shifts in individual LFSRs, their corresponding periods and overall impact on product sequence $s_t$. We have a important theorem here.
\begin{theorem}

\label{CRT-theorem}
Let $s_t\in GF(2^m)$ be a reference product sequence with period $n$ having two constituent LFSRs defined over primitive polynomials with individual periods $n_1$ and $n_2$. With different shifts $k_1$ and $k_2$ in initials states of LFSRs, resulting output sequnece $u_t$ is correlated to $s_t$ by~(\ref{eq:shift}) where shift $\tau$ is determined through CRT as

      \begin{eqnarray*}
      \tau  &\equiv& k_1\mbox{\; (mod} \mbox{\;}n_1) \\
      \tau  &\equiv& k_2\mbox{\; (mod} \mbox{\;}n_2)   
               \end{eqnarray*}

\end{theorem}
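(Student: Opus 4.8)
The plan is to reduce the statement to the elementary fact that the two LFSR outputs are multiplied term-by-term, so that a cyclic shift of the product sequence is exactly a simultaneous equal shift of both factors. First I would write the reference sequence in its product form $s_t = a_t\,b_t$ as in (\ref{eq:twoLFSR}), where $a_t$ and $b_t$ are the $m$-sequences emitted by the two LFSRs, with periods $n_1$ and $n_2$ respectively. Changing the initial states by $k_1$ and $k_2$ replaces the factors by their cyclic shifts $a_{t+k_1}$ and $b_{t+k_2}$, so the resulting output is $u_t = a_{t+k_1}\,b_{t+k_2}$. The goal is then to produce a single integer $\tau$ with $u_t = s_{t+\tau}$ for all $t$, which is precisely the shift-equivalence (\ref{eq:shift}).

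Next I would observe that, because the product is term-wise, $s_{t+\tau} = a_{t+\tau}\,b_{t+\tau}$. Hence it suffices to find $\tau$ for which both factors agree, i.e. $a_{t+k_1}=a_{t+\tau}$ and $b_{t+k_2}=b_{t+\tau}$ hold simultaneously for every $t$. Here I would invoke the structure recalled in Section~3: each of $a_t$ and $b_t$ is an $m$-sequence, so within its linear space $G(f)$ the shift operator is injective, a maximal-period sequence coinciding with a shift of itself only when the shift is a multiple of its period. Consequently $a_{t+k_1}=a_{t+\tau}$ for all $t$ is equivalent to $\tau\equiv k_1\pmod{n_1}$, and likewise $b_{t+k_2}=b_{t+\tau}$ for all $t$ is equivalent to $\tau\equiv k_2\pmod{n_2}$. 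This reduces the theorem to solving the simultaneous congruences in the statement.

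Finally I would close the argument with the Chinese Remainder Theorem. Since the two LFSRs are built on primitive polynomials, their periods are $n_1=2^{d_1}-1$ and $n_2=2^{d_2}-1$; these are coprime exactly when $\gcd(d_1,d_2)=1$, in which case CRT furnishes a unique $\tau$ modulo $n_1 n_2 = \mathrm{lcm}(n_1,n_2) = n$ satisfying $\tau\equiv k_1\pmod{n_1}$ and $\tau\equiv k_2\pmod{n_2}$. Substituting this $\tau$ into the two equivalences above makes both factors coincide, and multiplying term-wise yields $u_t = s_{t+\tau}$, which is the asserted correlation via (\ref{eq:shift}).

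I expect the one point needing genuine care to be the coprimality hypothesis underpinning the CRT step: the clean conclusion of a unique $\tau$ modulo $n$ requires $\gcd(n_1,n_2)=1$. I would therefore state explicitly that the theorem is formulated under $\gcd(n_1,n_2)=1$ (equivalently $\gcd(d_1,d_2)=1$), and remark on what happens otherwise: when $\gcd(n_1,n_2)=g>1$ the congruence system is solvable only if $k_1\equiv k_2\pmod{g}$, so not every pair $(k_1,k_2)$ of independent shifts arises from a single shift of $s_t$, and when a solution exists it is determined only modulo $\mathrm{lcm}(n_1,n_2)$. The remainder is the routine verification sketched above.
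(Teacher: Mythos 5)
Your proof is correct, and its core idea --- applying the CRT to the shift indices taken modulo the individual periods $n_1$ and $n_2$ --- is the same as the paper's. But the logical architecture is genuinely different, and yours is the sounder of the two. The paper's proof runs top-down: it first \emph{asserts} that $s_t$ and $u_t$ are shift-equivalent (justifying this only by the claim that both lie in $GF(2^m)$ and invoking the structure of $G(f)$ from~(\ref{eq:gf}), which as stated is a non sequitur --- two sequences of the same period need not be cyclic shifts of one another), and only then locates the shift via two remarks: Remark~\ref{C-1} on the repetition counts $\delta_i = \mathrm{lcm}(n_1,n_2)/n_i$ of each constituent stream, and Remark~\ref{C-2} asserting that each index pair $(j,v)$ occurs exactly once per period if and only if $\gcd(n_1,n_2)=1$. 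Your proof runs bottom-up: you write $u_t = a_{t+k_1}b_{t+k_2}$, reduce the problem to simultaneous factor-wise agreement, characterize that agreement by the exact periodicity of $m$-sequences as $\tau \equiv k_1 \pmod{n_1}$ and $\tau \equiv k_2 \pmod{n_2}$, and then let the CRT \emph{construct} $\tau$, after which $u_t = s_{t+\tau}$ follows by direct substitution. This buys two things the paper's version lacks: the shift-equivalence is proved rather than presupposed, and the coprimality hypothesis $\gcd(n_1,n_2)=1$ --- which the paper leaves buried inside Remark~\ref{C-2} --- is surfaced as an explicit standing assumption of the theorem, together with the correct description of the degenerate case (solvability only when $k_1 \equiv k_2 \pmod{\gcd(n_1,n_2)}$, uniqueness only modulo $\mathrm{lcm}(n_1,n_2)$). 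Your observation that coprimality of $2^{d_1}-1$ and $2^{d_2}-1$ is equivalent to $\gcd(d_1,d_2)=1$ is also accurate and worth keeping.
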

\begin{proof}
Within a cyclic group $GF(2^m)$, associated linear space $G(f)$ of dimension $m$ contains $2^{m}-1$ non-zero binary sequences by~(\ref{eq:gf}).
\\ As $s_t$ and $u_t$ both $\in$ $GF(2^m)$, they are shift equilvalents by~(\ref{eq:shift}) with unknown shift value of $\tau$. 
\\ The product sequnec $s_t$ of $a_t$ and $b_t$ can be expressed as
\begin{equation}
s_i = a_{j}.b_{v} 
\end{equation}

 where  $0\leq i \leq n-1$ , $0\leq j \leq n_{1}-1$ and   $0\leq v \leq n_{2}-1$. 
\begin{remark}
\label{C-1}
While contributing towards a product sequence of length $n$ with two LFSRs, stream of LFSR-1 defined over $GF(2^p)$ with primitive polynomial and its maximum period $2^{p}-1$  is repeated $\delta_1$ times while LFSR-2 defined over $GF(2^q)$ with primitive polynomial as well and corresponding period $2^{q}-1$ is repeated $\delta_2$ where

\begin{eqnarray*}
      \delta_1  &= & \frac{\mbox{lcm}(n_1,n_2)}{n_1}\mbox{\;,\;\; and\;}  \\
      \delta_2  &= & \frac{\mbox{lcm}(n_1,n_2)}{n_2}   
               \end{eqnarray*} 
               \end{remark}

\begin{remark}
\label{C-2}
 Within a sequence of period $n$ for a product sequence, each value of index $j$ corresponds to all values of index $v$ if and only if $gcd(n_1,n_2)=1$.
\end{remark}
From Remarks~\ref{C-1} and~\ref{C-2}, any shift in LFSRs initial states will produce output corresponding to some fixed indices of $j$ and $v$ which already existed in the refernce sequence at some fixed place with initial states of LFSRs without shift.\\
\\ With known values of $j$ and $v$ i.e. $k_{i's}$, CRT will give us the value of $\tau$ mod $n$ as
\begin{eqnarray*}
      \tau  &\equiv& k_1\mbox{\; (mod} \mbox{\;}n_1) \\
     \tau  &\equiv& k_2\mbox{\; (mod} \mbox{\;}n_2)   
               \end{eqnarray*}
\end{proof}	
\begin{example}
Let we have a sequence $s_t$ generated from product of two LFSRs having primitive p[olynomials of $g_1(x)  x^2+x+1$ and $g_2(x)  x^3+x+1$. The period $n_1$ of stream $a_t$ corresponding to LFSR-1 is $3$ and $n_2$ of $b_t$ corresponding vto LFSR-2 is $7$. The period $n$ of $s_t$ is $21$.

Table~\ref{tab:cyclic-m} demonstrates product of two $m$ sequences generated from these two LFSRs. 
 \begin{table}[H]
\small
\begin{center}
\caption[Sample Table]{Product sequence of 2x LFSRs with $n_1 = 3$ and $n_2 = 7$}
\begin{tabular}{|c|c| c| c| c| c| c| c| a| c| c| c| c| c| c| c| a| c| c| c| c| c| }
\hline
%\rowcolor{Gray}
 Shift Index &0 & 1 & 2 & 3 & 4 & 5 & 6 & 7 & 8 & 9 & 10 & 11 & 12 & 13 & 14 & 15 & 16 & 17 & 18 & 19 & 20  \\ \hline
  
 $a_t$&$a_{1}$ & $a_{2}$ &$a_{3}$ & $a_{1}$ & $a_{2}$ & $a_{3}$ & $a_{1}$ & $a_{2}$ &$a_{3}$ & $a_{1}$ & $a_{2}$ & $a_{3}$ & $a_{1}$ & $a_{2}$ & $a_{3}$ & $a_{1}$ & $a_{2}$ & $a_{3}$ & $a_{1}$ & $a_{2}$ & $a_{3}$  \\ \hline

$b_t$& $b_{1}$ & $b_{2}$ &$b_{3}$ & $b_{4}$ & $b_{5}$ & $b_{6}$ & $b_{7}$ &  $b_{1}$ & $b_{2}$ &$b_{3}$ & $b_{4}$ & $b_{5}$ & $b_{6}$ & $b_{7}$ & $b_{1}$ & $b_{2}$ &$b_{3}$ & $b_{4}$ & $b_{5}$ & $b_{6}$ & $b_{7}$  \\ \hline

 $s_t$&$s_{1}$ & $s_{2}$ &$s_{3}$ & $s_{4}$ & $s_{5}$ & $s_{6}$ & $s_{7}$ &  $s_{8}$ & $s_{9}$ &$s_{10}$ & $s_{11}$ & $s_{12}$ & $s_{13}$ & $s_{14}$ & $s_{15}$ & $s_{16}$ &$s_{17}$ & $s_{18}$ & $s_{19}$ & $s_{20}$ & $s_{21}$  \\ \hline
\end{tabular}

\label{tab:cyclic-m}
\end{center}

\end{table}

We analyze the impact of shift on LFSR sequences and their behaviour in cyclic stuctures of finite fields involved. We will shift the LFSR sequences one by one and observe the fixed patterns which can be exploited in cryptanalysis of the combiner generators. We can represent shifts in LFSRs sequences with $k$ and $l$ as 

 \begin{equation}\label{eq:twoLFSR-shift}
s_t =   a_{i+k}.b_{i+l}\;,\;\;\;\;\;\;\;\mbox{with} \; 0\leq \; i \; \leq n-1
\end{equation}

where $k \in [0,n_{1}-1]$ and $l \in [0,n_{2}-1]$. Table~\ref{tab:1-bit-a} demonstrates the scenerio where $a_t$ is left shifted by one bit while keeping the $b_t$ fixed with initial state of '1'.

\begin{table}[H]
\small
\begin{center}
\caption[Sample Table]{Product sequence with $a_{t}$ shifted left}
\begin{tabular}{|c|a| c| c| c| c| c| c|a| c| c| c| c| c| c| a| c| c| c| c| c| c| }
\hline

 Shift Index &7 & 8 & 9 & 10 & 11 & 12 & 13 & 14 & 15 & 16 & 17 & 18 & 19 & 20 & 0 & 1 & 2 & 3 & 4 & 5 & 6  \\ \hline
  
 $a_t$ & $a_{2}$ &$a_{3}$ & $a_{1}$ & $a_{2}$ & $a_{3}$ & $a_{1}$ & $a_{2}$ &$a_{3}$ & $a_{1}$ & $a_{2}$ & $a_{3}$ & $a_{1}$ & $a_{2}$ & $a_{3}$ & $a_{1}$ & $a_{2}$ & $a_{3}$ & $a_{1}$ & $a_{2}$ & $a_{3}$ &  $a_{1}$  \\ \hline

$b_t$ & $b_{1}$ & $b_{2}$ &$b_{3}$ & $b_{4}$ & $b_{5}$ & $b_{6}$ & $b_{7}$ &  $b_{1}$ & $b_{2}$ &$b_{3}$ & $b_{4}$ & $b_{5}$ & $b_{6}$ & $b_{7}$ & $b_{1}$ & $b_{2}$ &$b_{3}$ & $b_{4}$ & $b_{5}$ & $b_{6}$ & $b_{7}$  \\ \hline
 
 $u_t$&$s_{8}$ & $s_{9}$ &$s_{10}$ & $s_{11}$ & $s_{12}$ & $s_{13}$ & $s_{14}$ &  $s_{15}$ & $s_{16}$ &$s_{17}$ & $s_{18}$ & $s_{19}$ & $s_{20}$ & $s_{21}$ & $s_{1}$ & $s_{2}$ &$s_{3}$ & $s_{4}$ & $s_{5}$ & $s_{6}$ & $s_{7}$  \\ \hline
\end{tabular}

\label{tab:1-bit-a}
\end{center}
\end{table}

Comparison of Table~\ref{tab:cyclic-m} with Table~\ref{tab:1-bit-a} reveals that shifting one bit left of $a_t$ and fixing the $b_t$ to reference initial state of '1' shifts $s_t$ by seven units left. Similarly, shifting another bit of $a_t$ to left, brings $a_{3}$ corresponding to $b_{1}$ which can be located in Table~\ref{tab:cyclic-m} at shift position 14. So two left shifts of $a_t$ shifts $s_t$ by 14 units left with reference to bit positions in Table~\ref{tab:cyclic-m}. Now we analyze the impact of left shift of $b_t$ on $s_t$. Table~\ref{tab:1-bit-b} demonstrates the scenerio where $b_t$ is left shifted by one bit while keeping the $a_t$ fixed with initial state of '1'. 
\begin{table}[H]
\small
\caption[Sample Table]{Product sequence with $b_{t}$ shifted left}
\begin{center}
\begin{tabular}{|c|a| c| c| c| c| c| a| c| c| a| c| c| c| c| c| a| c| c| c| c| c| }
\hline
 Shift Index&15 & 16 & 17 & 18 & 19 &20 &0 & 1 &2 & 3 & 4 & 5 & 6 & 7 & 8 & 9 & 10 & 11 & 12 & 13 & 14 \\ \hline
 $a_t$&  $a_{1}$ &  $a_{2}$& $a_{3}$ & $a_{1}$ & $a_{2}$ & $a_{3}$ & $a_{1}$ & $a_{2}$ &$a_{3}$ & $a_{1}$ & $a_{2}$ & $a_{3}$ & $a_{1}$ & $a_{2}$ & $a_{3}$ & $a_{1}$ & $a_{2}$ & $a_{3}$ & $a_{1}$ & $a_{2}$ & $a_{3}$   \\ \hline

  $b_t$&$b_{2}$ &$b_{3}$ & $b_{4}$ & $b_{5}$ & $b_{6}$ & $b_{7}$ &  $b_{1}$ & $b_{2}$ &$b_{3}$ & $b_{4}$ & $b_{5}$ & $b_{6}$ & $b_{7}$ & $b_{1}$ & $b_{2}$ &$b_{3}$ & $b_{4}$ & $b_{5}$ & $b_{6}$ & $b_{7}$ & $b_{1}$  \\ \hline
 
 $u_t$&$s_{16}$ & $s_{17}$ &$s_{18}$ & $s_{19}$ & $s_{20}$ & $s_{21}$ & $s_{1}$ &  $s_{2}$ & $s_{3}$ &$s_{4}$ & $s_{5}$ & $s_{6}$ & $s_{7}$ & $s_{8}$ & $s_{9}$ & $s_{10}$ &$s_{11}$ & $s_{12}$ & $s_{13}$ & $s_{14}$ & $s_{15}$  \\ \hline

\end{tabular}

\label{tab:1-bit-b}
\end{center}
\end{table}

It can be easily seen that one left shift in $b_t$ shifts  $s_t$ by 15 units where $b_{2}$ is corresponding to $a_{1}$. Similarly, another left shift in $b_t$ shifts $s_t$ by another 15 units bringing the $b_{3}$ corresponding to $a_{1}$. Subsequently, three left shifts in $b_t$ with reference to initial state of '1' brings $b_{4}$ corresponding to $a_{1}$ which is at shift index-3 in Table~\ref{tab:cyclic-m}. Similar fixed patterns can be observed for simultaneous shifts of LFSRs and it will be discussed with more detail in following paragraphs.

Let us model this fixed patterns in LFSRs cyclic structures and shifts in intial states of LFSRs through CRT as

	\begin{eqnarray*}
      x  &\equiv& k\mbox{\; (mod} \mbox{\;}n_1) \\
      x  &\equiv& l\mbox{\; (mod} \mbox{\;}n_2)   
               \end{eqnarray*}
	where $k$ and $l$ denote the amount of shifts in initial state of individual LFSRs with reference to initial state of '1'. The solution of CRT i.e. $x$(mod $r$) gives the amount of shift in $s_t$ with reference to $u_t$ as depicted in (\ref{eq:shift}). Consider a scenerio again where $a_t$ is shifted left by one bit and $b_t$ is fixed with initial state of '1' and can be expressed as 
	\begin{eqnarray*}
      x  &\equiv& 1\mbox{\; (mod} \mbox{\;}3) \\
      x  &\equiv& 0\mbox{\; (mod} \mbox{\;}7)   
               \end{eqnarray*}	
	The CRT gives the solution of 7(mod $21$) which is index position of $a_2$ corresponding to $b_1$ in Table~\ref{tab:cyclic-m} shifting the product sequence $s_t$ by seven units left. Consider another scenerio of simultaneous shifts in both LFSRs sequences where $a_t$ is shifted left by one bit and $b_t$ is shifted left by 3 bits with reference to their initial states of '1' and can be expressed as
	\begin{eqnarray*}
      x  &\equiv& 1 \mbox{\; (mod} \mbox{\;}3) \\
      x  &\equiv& 3 \mbox{\; (mod} \mbox{\;}7)   
               \end{eqnarray*}
	The CRT gives value of $-11$ which is $10$ (mod $21$), representing the product sequence $u_t$ as 10 units left shifted version of $s_t$. This value matches to index position of $b_{4}$ corersponding to $a_{2}$ in Table~\ref{tab:cyclic-m}.

 Our Observations related to direct correspondence of shift index with initial states of LFSRs and CRT calculations done modulo periods of individual LFSRs are valid for any number of LFSRs in different configurations of nonlinear sequence generators. These observations on classical theory of LFSR cyclic structures with their CRT based interpretation are considered significant for cryptanalysis. 

 \end{example}
 
\subsection{Frequency Domain Analysis of a Product Sequence}
To compute the DFT of the sequence $s_t$ using equation~(\ref{DFT eq}), we need to know the minimium polynomial for $s_t$ $\in GF(2^m)$ which can be efficiently determined through Berlekamp-Massey algorithm. It was demonstrated in last subsection that a linear structure exists in the spectral representation of component sequences $a_t$ and $b_t$ which propagates further in the DFT spectra of product sequence $s_t$. Few interesting results are presented here duly illustrated by an example:
 \begin{enumerate}
    
\item  Zero and non-zero positions of the DFT Spectra of $s_t$ can be determined even without knowing the minimum polynomial for $s_t \in GF(2^m)$ while working in the lower order associated fields of $a_t \in GF(2^p)$ and $b_t \in GF(2^q)$, where $p \mbox{\;} \& \mbox{\;} q  \mbox{\;}\textless \mbox{\;} m$. Any $k$-th component of DFT spectra of $s_t$ is non-zero if and only if $A_{k}$ and $B_{k}$ are both non zero, where $A_k$ and $B_k$ represents DFTs of $a_t$ and $b_t$ respectively.
\item  With known non-zero DFT points for $a_t$ and $b_t$, Chinese Remainder Theorem (CRT) can be used to determine non zero  points of DFT spectra of $s_t$ directly. With $n_1$ and $n_2$ be the individual periods for $a_t$ and $b_t$ respectively, we can apply CRT as:
   
   \begin{eqnarray*}
      x  &\equiv& k_1\mbox{\; (mod} \mbox{\;}n_1) \\
      x  &\equiv& k_2\mbox{\; (mod} \mbox{\;}n_2)   
               \end{eqnarray*}
   where $k_1$ and $k_2$ are non zero index positions of $A_k$ and $B_k$ respectively and $x$ is the position of non-zero componenet of DFT spectra of $s_t$ within its period $n$.
   
   \item DFT of a product sequence with minimal polynomial having no multiple roots follows a harmonic structure of its elements with $\alpha^i \in GF(2^m)$  and its harmonics $\alpha^{i{^j} \mbox{mod\;} n} \in GF(2^m)$ with $0\leq i \leq n-1$, appearing in DFT spectrum. 
	\item Non-zero indices of DFT sequences also follow a fixed pattern. A k-th non-zero component has its harmonics at all $(2^{j}k)\; \mbox{mod\;} n$ points with $1\leq j \leq n-1$. 
	\item Shifting of any component sequence  $a_t$ or $b_t$ will impact the spectral components of resulting sequence $s_t$ by (\ref{DFT shift eq}).

\item The zero components in the fourier transform of a product sequence $s_t$ defined over over $GF(2^m)$ are related to roots of its minimum polynomial $g_(x)$.
 
\item Consider two LFSR sequences $a_t \in GF(2^p)$ and $b_t \in GF(2^q)$ being components of a product sequence, their spectral components as $A_k$ and $B_k$ respectively with $0 \leq k \leq N-1$. While working in base fields, can we determine their corresponding frequency components of product stream $z_t \in GF(2^m)$?. Our results on this problem are being published somewhere else shortly.
  
   \end{enumerate}

\begin{example}
Consider a product sequence $s_t$ generated from two LFSRs with minimum polynomials  $g_{1}(x) = x^3 + x +1$ and  $g_{2}(x) = x^2 +x +1$.

\begin{enumerate}
\item In time domain representation, we have following sequences.\\
\;\;\;\;\;\;\;\;Sequence $a_t$:\;\;\;$011$\;\;\;\;\;\;\;\;\;\;\;\;\;\;\;\;\;\;\;\;\;\;\;\;\;\;\;\;\;\;\;\;\;\;\;\;\;\;\;\;\;\;\;\;\;\;\;\;\;\;\;\;\;\;\;\;\;\;\;\;\;\;\;\;\;\;\;\;\;\;\;\;\;(of period 3)\\
\;\;\;\;\;\;\;\;	Sequence $b_t$:\;\;\;$0010111$\;\;\;\;\;\;\;\;\;\;\;\;\;\;\;\;\;\;\;\;\;\;\;\;\;\;\;\;\;\;\;\;\;\;\;\;\;\; \;\;\;\;\;\;\;\;\;\;\;\;\;\;\;\;\;\;\;\;\;\;\;\;\;\;(of period 7)\\
\;\;\;\;\;\;\;\;	Sequence $s_t$:\;\;\;$001011000001010010011101110111011101$\;\;\;\;\;\;\;\;\;\;\;\;\;\;(of period 21)\\
\item From~(\ref{DFT eq}), frequency domain representations of these sequences are:
\begin{enumerate}
\item $A_k = 0,1,1$
\item $B_k = 0,0,0,\alpha^4, 0,\alpha^2,\alpha$
\item To compute $S_k$, associated minimum polynomial is determined through Berlekamp-Massey algorithm which is 
$g(x) = x^6+x^4+x^2+x+1$.\\
$S_k = 0,0,0,0,0,\alpha^9,0,0,0,0, \alpha^{18},0,0,\alpha^{15},0,0,0,\alpha^{18},0,\alpha^9,\alpha^{15}$
\end{enumerate} 

\item Following is to be notified here:-

\begin{enumerate}
\item Non-zero DFT points in $S_k$ clearly follow a linear behaviour as of time domain representation where any $k$-th component is non-zero if and only if $A_{k}$ and $B_{k}$ are both non-zero. CRT can be directly used to determine these non-zero points.
\item Harmonic pattern of DFT spectra are visible for $A_k$, $B_k$ and $S_k$. 
\item Non-zero indices of DFT sequences also follow a fixed pattern. In case of $S_k$, non zero DFT element at index 5 has its harmonics at indices $10,\; 20,\; 19 \;(40\; \mbox{mod\;} 21), \;17\;(80\; \mbox{mod\;} 21)$ and at $13\;(160\; \mbox{mod\;} 21)$.
\item If we shift $b_t$ by one bit to left, resulting sequences in frequency domain will hold shift property of equation (\ref{DFT shift eq}).\\
$A_k = 0,1,1$ \\
 $B_k = 0,0,0,\alpha, 0,\alpha^4,\alpha^2$ \\
$S_k = 0,0,0,0,0,\alpha^{18},0,0,0,0, \alpha^{15},0,0,\alpha^{9},0,0,0,\alpha^{15},0,\alpha^{18},\alpha^{9}$
\item The zero components in the fourier transform of a product sequence $s_t$ defined over $GF(2^m)$ are related to roots of $g_(x) = x^6+x^4+x^2+x+1$. As roots of $g(x)$ are $\alpha$ alongwith its conjugates i.e. $\alpha^2$, $\alpha^4$, $\alpha^8$ and $\alpha^{16}$ so first, second, fourth, eigth and sixteenth spectral components are zero.

\end{enumerate}

\end{enumerate}
%\end{enumerate}
\end{example}

%\end{enumerate}
% ...............................mputation sin freuency domain.......................................................
%  				    Section 3 Ends Here  
% ...................................................................................... 

% ...................................................................................... 
% ......................................................................................
%  				    
% ......................................................................................
%\section{Common Configurations of LFSR Based Sequence Generators}
The patterns observed in during time and frequency domain analysis of product sequences will be applied to LFSR based sequence generators in the following sections. LFSR based sequence generators can be broadly divided in to three main classes; non-linear filter generators, non-linear combiner generators and clock controlled generators with few variants of shrinking generators~\cite{canteaut2011stream}.  Transform domain anlysis of filter and combiner generators will be presented in following sections.

% ......................................................................................
%  				    Section 4 Ends Here  
% ...................................................................................... 

% ......................................................................................
%  				    Section 5 starts here
% 					Transformed Domain Analysis of Filter Generators 
% ......................................................................................
\section{Transformed Domain Analysis of Filter Generators}
The nonlinear filter generator consists of a single LFSR which is filtered by a nonlinear boolean function $f$ and is called as filtering function. In a filter generator, the LFSR feedback polynomial, the filtering function and the tapping sequence are usually publicly known. The secret parameter is the initial state of the LFSR which is derived from the secret key of the cipher by a key-loading algorithm. Therefore, most
attacks on filter generators consist of recovering the LFSR initial state from the knowledge of some bits of the sequence produced by the generator (in a known plaintext attack), or
of some bits of the ciphertext sequence (in a ciphertext only attack).

\subsection{Time Domain Analysis of Filter Generators}
Let $s_t$ be an $m$-sequence with maximum period $2^{m}-1$ generated from an  LFSR whose length is $m$, then $z_t$ is the output sequence of a filter generator

\begin{equation}\label{eq:FG}
z_t =   f (s_0,s_1,s_2,.....,s_{t-1}) \;\;\;\forall t\geq 0;
\end{equation} 
where $s_0$ are inputs of nonlinear function $f$ coressponding to taps of an LFSR as shown in Figure below.

\begin{figure}[h!]\label{fig:FG1}
	  \centering
		      \includegraphics[scale=0.60]{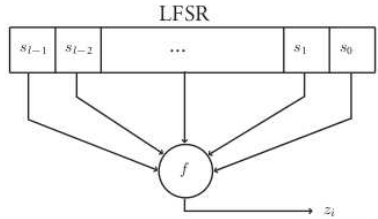}
	  \caption{A Simple Filter Generator}
	\end{figure}
In order to obtain a keystream sequence having good cryptographic properties, the filtering function $f$ should be balanced (i.e., its output should be uniformly distributed), should have large algebraic degree with large correlation and algebraic immunity.

Now we present few important facts related to design parameters of filter generators in time domain~\cite{canteaut2011stream}: 

\begin{enumerate}

\item The period of output stream $z_t$ is $2^{m}-1$ where $m$ is degree of feedback polynomial of LFSR which is primitive in most cases. The requirement of long period is a direct consequence of shift property of $m$-sequences as defined in (\ref{eq:shift}) within a linear space $G(f)$. 
\item The output sequence of a filter generator $z_t$ is a linear recurring sequence whose linear complexity $L(z)$ is related to length of LFSR and algebraic degree of nonlinear filtering function $f$. For any LFSR defined over $GF(2)$ with primitive feedback polynomial, an upper bound of linear span is
\begin{equation}\label{eq:LC}
L(z) \leq   \sum^{d}_{i=1} (^{l}_{i} )
\end{equation}
where $d$ is the degree of filtering function $f$ and $l$ is length of LFSR. Thus for a higher $L$ in sequence generator designs, $d$ and $l$ are chosen to be as high as possible.  

\item LFSR feedback polynomial should not be sparse.
\item The tapping sequence should be such that the memory size (corresponding to the largest gap between the two taps) is large and preferably close to its maximum possible value. This resists the generalized inversion attack ~\cite{canteaut2011stream} which exploits the memory size.  
\item If the LFSR tabs are equally spaced, then the lower bounds of linear complexity will be 
\begin{equation}\label{eq:LS-equaltaps}
L(z) \geq    ( ^{l}_{d}  )
\end{equation}

\item The boolean function used as a filtering function must satisfy following criteria to be called as good cryptographic function. For details, readers may refer to  ~\cite{cusick2009cryptographic}.
\begin{enumerate}
\item The boolean function should have high algebraic degree. A high algebraic degree increases the linear complexity of the generated sequence.

\item It should have high correlation immunity which is defined as measure of degree to which its output bits are correlated to subset of its input bits. High correlation immunity forces the attacker to consider several input variables jointly and thus decreases the vulnerability of divide-and-conquer attacks.

\item It should have high non-linearity which is related to its minimum distance from all affine functions.  A high nonlinearity gives a weaker correlation between the input and output variables. This criteria is in relation to linear cryptanalysis, best affine appproximation attacks, and low order approximation attacks. Moreover, high non-linearity resists the correlation and fast correlation attacks ~\cite{meier1989fast} by making the involved computations infeasible.
\item The recent developements in cryptanalysis attacks introduced the most significant cryptographic criteria for boolean functions termed as algebraic immunity. There should not be a low degree function $g$ for $f$ which  satisfies $f * g = 0 $ or $(f+1) * g = 0 $. Algebraic immunity is defined as minimum degree function $g$ for any bollean function $f$ satisfying the said criteria. Attacks exploiting this weakness in boolean functions are called algebraic ~\cite{courtois2003algebraic} and fast algebraic attacks  ~\cite{courtois2003fast},~\cite{canteaut2006open}. 
\item A more recent attack on filter generators     ~\cite{ronjom2007new}  exploiting the underlying algebraic theory suggests an updated estimate of degree of non linear boolean functions to resist the new kind of attack on filter generators.  
\end{enumerate}

\end{enumerate}
\subsection{Frequency Domain Analysis of Filter Generators}

This section provides DFT based analysis of filter generators which has been known in public domain for few years. Few additonal results of our frequency domian analysis has also been included here. 
\begin{enumerate}
\item Irrespective of time domain requirement of a long period for a filter generator, DFT simplifies the associated high computational problem to any $k$-th component of Fourier spectra through the relation
\begin{equation}\label{DFT shift eq-2}
  \alpha^{\tau} =  ( Z_{k} . A_{k}^{-1} )^{k^{-1}}
\end{equation}
 where $\tau$ determines the exact amount of shift between $a_t$ and $z_t$ and k is index of any one component of DFT spectra. 
\item To obtain specific component of Fourier spectra and limit the computational complexity within affordable bounds, Fast Discrete Fourier spectra attacks ~\cite{gong2011fast} propose an idea of selecting a suitable filter polynomial $q(x)$, an LTI system, to pass only specific spectral points and restrict all others which are nulled to zero. To illustrate, we mention an important lemma here without proof.
\begin{lemma}
\label{LTI-lemma}
Let $q(x)$ be a polynomial defined over $GF(2)$ with period $r$ as $q(x) = \sum_{i=0}^{r} c_{i}x^{i}$. We apply $s_t$ to LTI system having a function $q(x)$ and $z_t$ is the output sequence. By theory of LTI system resposne to any arbitrary signal and convolution in time domain, we have \\
\begin{equation}\label{LTI-eq-1}
  z_t =  \sum_{i=0}^{r} c_{i}s_{i+t}\;\;\;t = 0,1,...,n-1
\end{equation}
Converting the relation into frequnecy domain, we get
\begin{equation}\label{LTI-eq-2}
  Z_k =  q(\alpha^k)S_k\;\;\;k = 0,1,...,n-1
\end{equation}
where $q(\alpha^k)$ is infact $Q_k$; a way of interpreting DFT in terms of polynomials.
\end{lemma}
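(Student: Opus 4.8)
The plan is to recognize this statement as the convolution theorem for the finite-field DFT: a weighted sum of time shifts of $s_t$ corresponds to pointwise multiplication of $S_k$ by the transform of the filter. I would take the first relation $z_t = \sum_{i=0}^{r} c_i s_{i+t}$ as given, since it is merely the LTI response written out as a cyclic convolution; the real content is the second identity $Z_k = q(\alpha^k)S_k$, and that is what I would actually prove.

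First I would apply the defining DFT sum (\ref{DFT eq}) to the output sequence and substitute the time-domain relation:
\begin{equation*}
Z_k = \sum_{t=0}^{n-1} z_t \alpha^{-tk} = \sum_{t=0}^{n-1} \left( \sum_{i=0}^{r} c_i s_{i+t} \right) \alpha^{-tk}.
\end{equation*}
Because both sums are finite, I may interchange their order freely, which gives
\begin{equation*}
Z_k = \sum_{i=0}^{r} c_i \left( \sum_{t=0}^{n-1} s_{i+t} \alpha^{-tk} \right).
\end{equation*}
The inner sum is exactly the DFT of the sequence $s$ shifted by $i$ positions, so by the time-shift property (\ref{DFT shift eq}) it equals $\alpha^{ik} S_k$. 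Pulling the common factor $S_k$ out of the $i$-sum then yields
\begin{equation*}
Z_k = S_k \sum_{i=0}^{r} c_i (\alpha^k)^i = q(\alpha^k) S_k,
\end{equation*}
and the remaining factor is precisely $Q_k$, the evaluate-the-polynomial form of the transform recorded in (\ref{DFT eq-poly}).

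The step I expect to require the most care is the application of the shift property to the inner sum, because it silently relies on the $n$-periodicity of $s_t$: the reindexing $t \mapsto t+i$ only reproduces the same sum over a full period when the indices are read modulo $n$, and one must also verify that the filter period $r$ and the sequence period $n$ are compatible so that $z_t$ is itself $n$-periodic and the full-length DFT is well defined. Once this periodic bookkeeping is in place, the remainder — interchanging the finite sums and recognizing $\sum_i c_i (\alpha^k)^i$ as $q(\alpha^k)$ — is routine, and the identification $q(\alpha^k) = Q_k$ is immediate from the polynomial view of the DFT.
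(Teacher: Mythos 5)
Your proof is correct, and it is worth noting that the paper itself offers no proof at all: the lemma is introduced with the words ``we mention an important lemma here without proof,'' the justification being only an appeal to LTI system theory and time-domain convolution. Your derivation supplies exactly the argument that appeal gestures at: expand $Z_k$ by the defining sum (\ref{DFT eq}), substitute the convolution, swap the finite sums, apply the time-shift property (\ref{DFT shift eq}) to the inner sum to extract $\alpha^{ik}S_k$, and collect $\sum_i c_i(\alpha^k)^i = q(\alpha^k)$. This is the standard finite-field convolution-theorem proof, and your explicit flagging of the periodicity bookkeeping (indices read modulo $n$, and compatibility of the filter with the period of $s_t$ so that $z_t$ is itself $n$-periodic) is precisely the point that makes the shift-property step legitimate; in the paper's intended application $q(x)$ divides the minimal polynomial $g(x)$ of the sequence, so this compatibility indeed holds.

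One small caution on the final identification: the paper's polynomial-evaluation convention (\ref{DFT eq-poly}) reads $S_k = s(\alpha^{-k})$, so under that convention the DFT of the coefficient sequence $c_i$ would be $Q_k = q(\alpha^{-k})$, not $q(\alpha^{+k})$. The identity you derived, $Z_k = q(\alpha^k)S_k$, is the correct one and matches (\ref{LTI-eq-2}); the gloss ``$q(\alpha^k)$ is in fact $Q_k$'' is an internal inconsistency of the paper's conventions rather than an error in your argument, but you should not present it as ``immediate from (\ref{DFT eq-poly})'' --- under that equation it would require evaluating at $\alpha^{-k}$.
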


\item Selection of polynomial $q(x)$ as a filter function is  discussed at length in~\cite{gongcloser}. To summarize, steps of computations are :-
\begin{enumerate}
\item Computing minimum polynomial $g(x)$ of output stream $z_t$ by generating a refernce sequence $a_t$  which infact is a shifted version of $z_t$.
\item Selecting  $k$ amongst the coset leaders such that
gcd $(k,n)=1$ and $g(\alpha^k)=0$.
\item Computing $k$-decimated sequence of reference stream $a_t$ as $c_t = a_{kt}$.
\item Using Berlekamp-Massey algorithm, computing minimum polynomial $g_{k}(x)$.
\item Computing $q(x)$ through a relation
\begin{equation}\label{eq:filter poly}
  q(x) = \frac{g(x)}{g_{k}(x)}
\end{equation}
\end{enumerate}   

\item Selection of $q(x)$ can be done by direct factorization of minimum polynomail $g(x)$. 
\item The DFT over binary fields can be computed for a filter generator without requiring entire sequence. Authors in~\cite{gong2011fast} have provided detailed algorithm for computing DFT for sequences with fewer bits (equal to linear span of that sequence or even lesser) as compared to the total period of the sequence. we also propose an alternate approach to select $q(x)$ in the next section dealing with the combiner generators.  
\item The consequence of Fast Discrete Fourier Spectra attacks on sequence generators introduced a new design criterion of spectral immunity. This criteria implies that in order to  resist the selective DFT attack, the minimal polynomial of an output sequence of an LFSR based key stream generator should be irreducible.  
\item Recent studies have shown that low spectral weight annihilators are essential for Fast Discrete
Fourier Spectra attacks~\cite{wang2012annihilators}.

\end{enumerate}
% ......................................................................................
%  				    Section 5 Ends Here  
% ...................................................................................... 

% ......................................................................................
%  				    Section 6 starts here
% 					Transformed Domain Analysis of Combiner Generators 
% ......................................................................................
\section{Transformed Domain Analysis of Combiner Generators}

A combiner generator consists of number of LFSRs which are combined by a nonlinear Boolean function.  The Boolean function $f$ is called the combining function and its output is the keystream. The Boolean function $f$ must have high algebraic degree, high nonlinearity and preferably a high order of correlation immunity. 

\subsection{Time Domain Analysis of Combiner Generators}
Consider a combiner generator consisting of $l$  LFSRs as shown in Figure ~\ref{fig:CG1}. 

\begin{figure}[h!]\label{fig:CG1}
	  \centering
		      \includegraphics[scale=0.25]{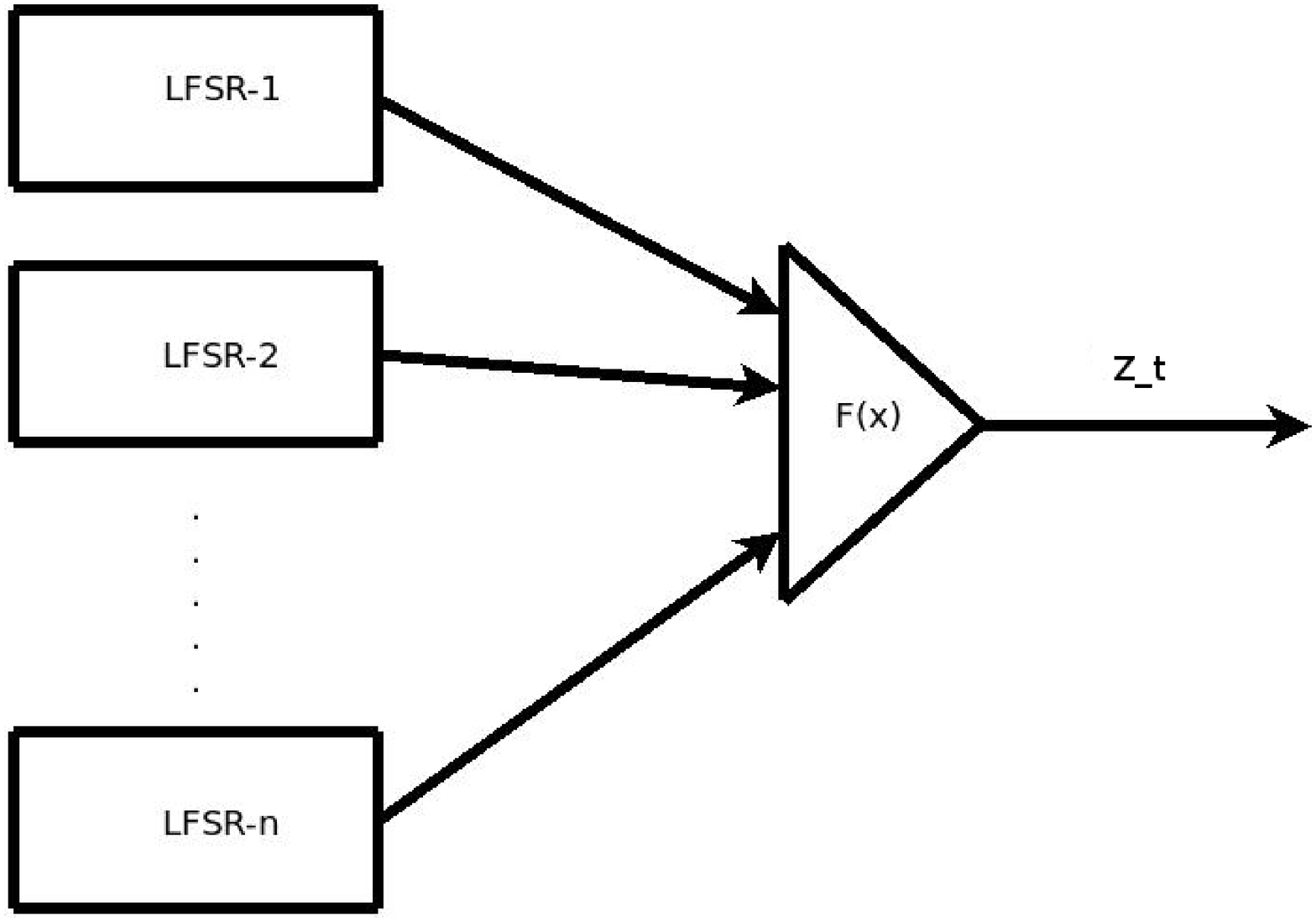}
	  \caption{A Simple Combiner Generator}
	\end{figure}
We denote the output sequence of the $i$-th LFSR as $a^i$, its minimal polynomial $g_i (x)$ and its length $r_i$ where $i = 1, 2, . . . , l$. We will assume here that all $r_i$'s are mutually coprime which is generally the case for the combiner generators. Let $f(a_t^1,a_t^2,....,a_t^l)$ be a non linear function where $f: GF(2^l) \rightarrow GF(2)$ takes input from $l$ LFSRs and  produces the key stream $z_t$ as

\begin{equation}
	  z_t = f(a_t^1,a_t^2,....,a_t^l)
	  \end{equation}

The criteria for selecting LFSRs and boolean function in a combiner generator are mostly similar to filter generators. However, few additions/differences related to design parameters are as follows:
\begin{enumerate}
\item Contrary to filter generators where a single LFSR with larger length is used, combiner generator employs multiple  LFSRs with comparatively smaller lengths. The requirement of primitive polynomials as feedback polynomials stays as such for each LFSR.
\item  The period of output keystream becomes $lcm (r_1,r_2,...,r_l)$.
\item Since the combining function involves both Xoring and multiplication operation, few properties of linear complexities are reproduced here from~\cite{brynielsson1986linear},~\cite{rueppel1987products}  and~\cite{herlestam1986functions}.  Considering linear complexities of LFSR sequences $a_t^1,a_t^2,....,a_t^l$ as $L_1,L_2,....,$ and $L_l$:
\begin{enumerate}
\item the linear complexity of the sequence $s_t = (a_{t}^1 + a_t^2 )$ satisfies

\begin{equation}\label{eq:LC-sum}
L (s_t) \leq L(a_{t}^1) + L(a_t^2),
\end{equation}
 the equality holds if and only if the minimal polynomials of $a_{t}^1$ and $a_{t}^2$ are relatively prime.
 
\item From~(\ref{eq:LC-product}), the linear complexity of the sequence $s_t = (a_{t}^1 . a_t^2 )$ satisfies

\begin{equation}\label{eq:LC-product-2}
L (s_t) \leq L(a_{t}^1) . L(a_t^2)
\end{equation}
  
\item The linear complexity of $z_t = f(a_t^1,a_t^2,....,a_t^l)$ satisfies
\begin{equation}\label{eq:LC-combiner}
L (z_t) = f(L_1,L_2,.....,L_l)
\end{equation}

\end{enumerate} 
\item The novel observations regarding fixed patterns of LFSRs,   cyclic structures existing in finite fields and their interpretation through CRT imply that index of observed keystream bits in any reference stream directly reveals the initial state of all LFSRs. CRT based interpretation of LFSR sequences in relation to their periods thus reiterates the requirement of long period for sequences of combiner generators.   
\end{enumerate} 

\subsection{Frequency Domain Analysis of Combiner Generators}
In this section, frequency domain analysis of combiner generators is presented. the application of selective DFT attacks on combiner generators is discussed in detail with some novel observations. Developing on the theory of selective DFT attack, a new efficient methodology is proposed to identify the initial states of all LFSRs of combiner generators.

In section-4.2, the direct mapping of product sequence between time and frequency domains was demonstrated. Likewise, the Xoring operation existing in most of the boolean functions also demonstrates similar mapping. For application of boolean functions in combiner generators, following is important:

\begin{enumerate}
\item For the product terms of a non-linear boolean function, a term of DFT spectra of the product stream is nonzero if and only if all the component DFT terms are nonzero. With known non-zero DFT points for $a_t^1,a_t^2,....,a_t^l$, CRT can be used to determine non zero  points of DFT spectra of $s_t$ directly as:
   
   \begin{eqnarray*}
      x  &\equiv& k_1\mbox{\; (mod} \mbox{\;}r_1) \\
      x  &\equiv& k_2\mbox{\; (mod} \mbox{\;}r_2)\\
      .  &\equiv& ...\mbox{\; ....} \mbox{\;}....\\
      x  &\equiv& k_l\mbox{\; (mod} \mbox{\;}r_l)
               \end{eqnarray*}
   where $k_1, k_2.....k_l$ denote non zero index positions of $a_t^1,a_t^2,....,a_t^l$ and $x$ is the position of non-zero componenet of DFT spectra of $s_t$ within its period $r$.

\item For terms of a non-linear boolean function being Xored, any DFT spectra term will be nonzero for which number of nonzero component DFT terms are odd. 
 
\end{enumerate}
Let us explain these facts through an example.
\begin{example}\label{exmp-2}
Consider a combiner generator consisting of 3 LFSRs with primitive polynomials as $g_{1}(x) = x^{2}+x+1,\; g_{2}(x) = x^{3}+x+1  $ and $g_{3}(x) = x^{5}+x^{2}+1$. The outputs of LFSRs in this case are m-sequences, denoted as $a^{1}_{t},a^{2}_{t}$ and $a^{3}_{t}$ respectively. Output stream of the generator is denoted as $z_{t}$ and a nonlinear function $f(x)$ is 
  	\begin{equation*}
  	 f(x) = a^{1}_{t}.a^{2}_{t} + a^{2}_{t}.a^{3}_{t} + a^{1}_{t}.a^{3}_{t} 
  	 \end{equation*}
  	where period of $z_{t}$ in this case becomes 651 as $\mbox{lcm}(3,7,31)= 651$.
  %	\item Linear complexity of $u_{t}$ is determined to be 31 through BMA and the corresponding minimum polynomial $m(x)$ is: \\
 % 	$x^{31} + x^{29} + x^{28} + x^{27} + x^{24} + x^{23} + x^{22} + x^{20} + x^{18} + x^{17} + x^{16} + x^{15} + x^{13} + x^{11} + x^{10} + x^9 + x^8 + x^7 + x^5 + x^4 + x^2 + x^1 + 1 $
 
Mapping of operations is demonstrated in spectral domain using~\ref{DFT eq} as:
 \begin{enumerate}
 \item DFT of individual LFSRs:
 \begin{enumerate}
 \item DFT of $a^{1}_{t}$:$ 0,1,1$
 \item DFT of $a^{2}_{t}$: $0,0,0,\alpha^{4},0,\alpha^{2},\alpha$
 \item DFT of $a^{3}_{t}$: only five non-zero DFT terms are at indices $15, 23, 27, 29$ and $30$ with values $\alpha^{29}, \alpha^{30}, \alpha^{15}, \alpha^{23}$ and $\alpha^{27}$ respectively.
  \end{enumerate}
  \item DFT of product streams:
  \begin{enumerate}
 \item DFT of $a^{1}_{t}a^{2}_{t}$: Corresponding to minimum polynomial of $x^{6}+ x^{4}+x^{2}+x+1$ and period = $21$, six non-zero components are $\alpha^{9}, \alpha^{18}, \alpha^{15}, \alpha^{18}, \alpha^{9}$  and $\alpha^{15}$ at indices $5, 10, 13, 17, 19$ and $20$. These indices can be easily determined while working in component fields of $GF(2^2)$ and $GF(2^3)$ by using CRT calculations as discussed in Section 4.2.
 \item DFT of $a^{2}_{t}a^{3}_{t}$: With minimum polynomial $x^{10} + x^5 + x^4 + x^2 + 1$ and period = $93$, ten non-zero components are at indices $23, 29, 46, 58, 61,\\ 77, 85, 89, 91$ and $92$.
 
 \item DFT of $a^{3}_{t}a^{1}_{t}$: Similarly with minimum polynomial $x^{15} + x^{12} + x^{10} + x^7 + x^6 + x^2 + 1$ and period = $217$, fifteen non-zero components are at indices $27, 54, 61, 89, 108, 122, 139, 153, 178, 185, 201, 209, 213, 215$ and $216$.
 \item To verify the established facts, DFT of product of all three streams has also been analyzed. For $a^{1}_{t}a^{2}_{t}a^{3}_{t}$ having a minimum polynomial of $x^{30} + x^{25} + x^{24} + x^{20} + x^{19} + x^{17} + x^{16} + x^{13} + x^{10} + x^9 + x^8 + x^7 + x^4 + x^2 + 1$ with period = $651$, thirty non-zero DFT components are at indices $61, 89, 122, 139, 178, 185, 209, 215, 244, 271, 278, 325, 356, 370, 395, 418, 430,\\ 433, 461, 488, 523, 542, 556, 587, 619, 635, 643, 647, 649$ and $650$. All these indices can be determined directly from knowing the individual DFTs of three LFSRs separately. For instance,
 \begin{eqnarray*}
      x  &\equiv& 1\mbox{\; (mod} \mbox{\;}3) \\
      x  &\equiv& 3\mbox{\; (mod} \mbox{\;}7)\\
      x  &\equiv& 15\mbox{\; (mod} \mbox{\;}31)
               \end{eqnarray*}
               gives result of 325 which exists amongst thirty non-zero DFT computations as well.
  \end{enumerate}
  \item To see impact of Xor operation in frequency domain, DFT of $z_t$ with minimum polynomial $x^{31} + x^{29} + x^{28} + x^{27} + x^{24} + x^{23} + x^{22} + x^{20} + x^{18} + x^{17} + x^{16} + x^{15} + x^{13} + x^{11} + x^{10} + x^9 + x^8 + x^7 + x^5 + x^4 + x^2 + x^1 + 1$ is computed. Results reveal that all indices where number of non-zero DFT terms for three product streams i.e. $a^{1}_{t}a^{2}_{t}$, $a^{2}_{t}a^{3}_{t}$ and $a^{1}_{t}a^{3}_{t}$ are odd, resulting DFT term is non-zero. For instance indices at 5, 10, 13, 17, 19 and 20 where only DFT of $a^{1}_{t}a^{2}_{t}$ term is non-zero, resulting DFT for $z_t$ is also non-zero. Similar is the case for other indices.
 \end{enumerate}
\end{example}

\subsubsection{Selective DFT Attacks on Combiner Generators.}
In this subsection, possibilty of extending selective DFT attack on combiner generators has been discussed. The attack algorithm on non-linear filter generator has been explained in~\cite{gong2011fast} and~\cite{gongcloser}. However, direct application of selective DFT attack on combiner generators has few limitations with regard to underlying design of these type of sequence generators. For simplicity, case of $m = L(a_t)$ has been considered here where $m$ is the number of known bits of key stream and $a_t$ is the coordinated scaled sequence of key stream. 
\begin{enumerate}
\item As combiner generators entail multiple LFSRs, determination of element $\beta \in GF(2^m)$ through coordinated scaled sequence doesnot lead to initial states of all LFSRs directly. 
\item In precomputation stage of the algorithm as discussed at length in~\cite{gongcloser}, $k$-decimation sequence of LFSR output sequence is computed followed by applying Berlekamp-Massey algorithm on it to determine its associated minium polynomial $g_k(x)$. With the help of this sequence $c_t$, $m$ x $m$ circulant matrix is obtained as follows   

\[ M = \left[ \begin{array}{cccccc}
c_0 & c_1 & c_2 & ... & c_
{m-2} & c_{m-1} \\
c_1 & c_2 & c_3 & ... & c_
{m-1} & c_{m} \\
c_2 & c_3 & c_4 & ... & c_
{m} & c_{m+1}\\
. & . & . & ... & . & . \\
. & . & . & ... & . & . \\
. & . & . & ... & . & . \\
c_{m-1} & c_{m} & c_{m+1} & ... & c_
{2m-3} & c_{2m-2} \end{array} \right]\]

\item In~\cite{gong2011fast}, this matrix is termed as coefficient matrix $H$ and $g_k(x)$ is minimum polynomial for $\alpha^k$ where $a_t$ is determined from~(\ref{eq:trace}) or through frequency component as:

\begin{equation}
  	 a_t = \sum_{k} Tr_{1}^{n_k}(A_{k}\alpha^{tk})\;\;t = 0...n-1
  	 \end{equation}
\item We propose another approach to compute $g_k(x)$ through factoring $g(x)$. In this case, step involving Berlekamp-Massey algorithm on $k$- decimated sequence will no longer be required. Setting up matrix $M$ or $H$ in this case is by direct initializing of corresponding LFSR from any random state.
\item The output of selective DFT algorithm produces $\beta = \alpha^{\tau}$. Using this value of $\tau$, left shift value for each LFSR sequence is determined by applying modular computations of CRT with respect to individual periods $r_i$'s of LFSRs  as:
	   \begin{equation*}
	   \tau_1\;  \;\equiv\; \; \tau \mbox{\;(mod\;}r_1) \end{equation*}
	   \begin{equation*}
	   \tau_2\;  \;\equiv\;  \tau \mbox{\;(mod\;}r_2)\end{equation*}
	   \begin{equation*}...\;\;\;\;...\;\;\;\;...\;....\end{equation*}
	   \begin{equation*}
	  \tau_l\;  \;\equiv\; \; \tau \mbox{\;(mod\;}r_l)\end{equation*}
	   
		\item Determine the initial state of each LFSR by applying individual shift values $\tau_i$'s to LFSRs sequences by 
		using (\ref{eq:trace}) within each field $GF(2^m)$ where 
\begin{eqnarray*}
      \beta_i  &=& \alpha^{\tau_i} \\
      b_{t}^{i} &= &Tr_{1}^{n} (\beta_i \alpha_{i}^{t}) 
        	\end{eqnarray*} 
\item If $\tau$ is directly applied to each LFSR, number of computations involved in shifting LFSR sequence is of the order to  $\mathcal{O}(\tau)$. CRT based interpretation of LFSR shifts in initial states with respect to their periods save the last step computations of selective DFT attacks.   
\end{enumerate}  
Let us demonstrate these observations through an example.
\begin{example}
Consider the same combiner generator as in Example~\ref{exmp-2}. Suppose we have only $31$ bits of  keystream $z_t = [1011110001111010111001011010111]$. With a known structure of the generator, our attack will determine the initial state of three LFSRs as follows:
\begin{enumerate}
\item Initially, possibility of success of selective DFT attack on given combiner generator will be determined. Through Berlekamp-Massey algorithm, minimum polynomial $g(x)$ of keystream $z_t$ will be computed. Applying factorization algorithm on $g(x)$ gives its three factors as $g_{1}(x) = x^{6}+ x^{4}+x^{2}+x+1$, $g_{2}(x) = x^{10} + x^5 + x^4 + x^2 + 1$ and $g_{3}(x) = x^{15} + x^{12} + x^{10} + x^7 + x^6 + x^2 + 1$. So the selective DFT attack on this combiner generator is possible.
\item  Generating a reference sequence $a_t$ and decimating it with $k = 58$ with gcd$(58,651)=1$ produces a coefficient sequence $c_t = 011010101110$.   
\item Applying Berlekamp-Massey algorithm on $c_t$ gives its associated minumum polynomial of $g_k(x) = x^6 + x^4 + x^2 + x + 1$.
\item Circulant matrix will thus be of dimension $6$ x $6$ as:
\[ M = \left[ \begin{array}{cccccc}
0 & 1 & 1 & 0 & 1 & 0 \\
1 & 1 & 0 & 1 & 0 & 1 \\
1 & 0 & 1 & 0 & 1 & 0 \\_
0 & 1 & 0 & 1 & 0 & 1 \\
1 & 0 & 1 & 0 & 1 & 1 \\
0 & 1 & 0 & 1 & 1 & 1 \\
1 & 0 & 1 & 1 & 1 & 0 \\
 \end{array} \right]\]

\item From~(\ref{eq:filter poly}), filter polynomial $q(x) = x^{25} + x^{22} + x^{19} + x^{17} + x^{10} + x^{9} + x^{8} + x^{5} + 1$.
 
\item The same results could be achieved from our observations as follows:
\begin{enumerate}
\item From DFTs of $a^{1}_{t}$, $a^{2}_{t}$ and $a^{3}_{t}$  while working in finite fields of $GF(2^3)$, $GF(2^3)$ and $GF(2^5)$, $k=58$ can be directly calculated to be non-zero index for DFT of $z_t$. 
\item Factorization done initially to check applicability of our selective DFT attacks already showed $g_1(x) = g_k(x)$ as one of the factor.
\item We thus obtain the same $q(x)$ as in step-5 above.
\end{enumerate}

\item Computations of selcetive DFT algorithm give the result of  $\tau = 19$.
	  
\item Finally, we can determine the left shift value in sequences for each LFSR by applying modular computations of CRT with respect to individual periods of LFSRs $r_i$'s as
	   \begin{equation*}
	   \tau_1\;  \;\equiv\; \; 19 \mbox{\;(mod\;}3) \end{equation*}
	   \begin{equation*}
	   \tau_2\;  \;\equiv\;  19 \mbox{\;(mod\;}7)\end{equation*}
	   \begin{equation*}
	  \tau_l\;  \;\equiv\; \; 19 \mbox{\;(mod\;}31)\end{equation*}
	   
\item Having determined the exact shift value for each LFSR, their initial states will be computed by 
		using (\ref{eq:trace}) within each field $GF(2^m)$ where 
		\begin{eqnarray*}
      \beta_i  &=& \alpha^{\tau_i} \\
      b_{t}^{i} &= &Tr_{1}^{n} (\beta_i \alpha_{i}^{t}) 
        	\end{eqnarray*}
		
	\item The initial fills of LFSRs with 1 left shifts in $a_t^1$, 5 left shifts in $a_t^2$ and 19 left shifts in $a_t^3$ gives:
 	
\begin{table}[H]
\begin{center}
\caption[Sample Table]{Initial States of 3 LFSRs}
\begin{tabular}{|c| c| }
\hline
& Initial State 
   \\ \hline
 LFSR-1 & 10   \\ \hline
 LFSR-2 & 101   \\ \hline
 LFSR-3 & 01111   \\ \hline

\end{tabular}

\label{tab:initial states-3LFSRs}
\end{center}
\end{table}	 
\end{enumerate}

\end{example}
We are thus able to recover exactly the initials states of all the LFSRs through our novel approach of interpretting fixed patterns in LFSRs sequences through CRT. Although the proposed approach was demonstrated on an illustrative example, but it holds true for any configuration of combiner generators. The same was tested with different non-linear combiner functions as well as with different number of LFSRs.    
% ......................................................................................
%  				    Section 6 Ends Here  
% ...................................................................................... 
\subsection{Complexity Comparisons}
Let us see the advantage of frequency domain analysis of LFSR based combiner generator over its time domain analysis. We will discuss computational complexity in relation to most common attacks also.
For a combiner sequence having period $N$ with $l$ number of constituent LFSRs, complexity of Exhaustive search is $2^{m_{1} + m_{2} + ...+ m_{l} -1}$ where as for correlation attack complexity reduces to $2^{m_{1}-1}+ 2^{m_{2}-1}+ ...+ 2^{m_{l}-1}$. To compute the complexity in case of selcetive DFT attacks on combiner generators, calculations for preprocessing and actual attack stage are~\cite{gong2011fast}:
\begin{enumerate}
\item Preprocessing Stage. The computations during this stage are sum of following:
\begin{enumerate}
 
\item The complexity of computing minimum polynomial $g(x)$ by applying Berlekamp-Massey algorithm on $s_t$ is $\mathcal{O}(L\; log_2(L))$ Xor operations where $L$ is linear complexity of $s_t$. Incase of using a relation $g(x) = \prod_{k\in I} g_k(x)$, complexity is $|I|[log_{2}(L)]^3$.
\item The complexity of computing $g_k(x)$ is $m\;log_2(m)^2$ for each $k$. For $k\in {I}$ with $I$ representing set of coset leaders, complexity will be $|I|[m\;(log_2\;m)^2]$.
\item The complexity of computing  $g(\alpha^k)$ is $\mathcal{O}(N\;\eta(m)+m\;log_2(N))$ $GF(2)$ operations where $\eta(m) = (m\;log\; m \;log\; log\; m)$ and $N$ is the degree of polynomial $g(x)$.
\end{enumerate}
\item Attack Stage. The complexity of this stage is sum of following:
\begin{enumerate}
\item Passing our sequence $s_t$ from LTI filter $q(x)$ is actually time convolution of $q(x)$ and $s_t$ which costs $\mathcal{O}(L)$ GF(2) operations.
\item Last step of DFT spectra attack giving the output $\beta$ is solving the system of linear equations over GF(2) in $d$ unknowns which has the complexity of utmost $\mathcal{O}(d^w)$, where $w$ is Strassen's reduction exponent $w = log_2(7) \approx 2.807$.
\item Determining left shift values for each LFSR through $\tau_i \equiv \tau\;\mbox{mod}\; r_i$ and computing initial states of LFSRs are of order $\mathcal{O}(l)$ each and are negligible, where $l$ is number of LFSRs. 
  
\end{enumerate}
\end{enumerate}
Having established the comparisons of complexities between exhasutive serach, correlation attack and selective DFT attack, let us map these to our example-3 above. Exahustive search costs utmost $2^{2+3+5-1}\; = \;2^9 \;\approx \mathcal{O}(512)\;GF(2)$ computations. With Probabilities of $P(z = a) = 6/8$, $P(z = b) = 6/8$ and $P(z = c) = 6/8$, correlation attack largerly reduces the complexity to $2^{2-1} + 2^{3-1} + 2^{5-1} \; \approx \; \mathcal{O}(21)$ $GF(2)$ operations. Incase of selective DFT attacks, complexity of preprocessing stage is  $\mathcal{O}(279)$ and of attack stage is $\mathcal{O}(150)$ making total of $\mathcal{O}(429)$ $GF(2)$ operations. Thus it can be clearly stated that after one time preprocessing computations of selective DFT attacks, complexity of actual attack stage with $\mathcal{O}(150)$ is promisingly less as compared to exhastive search attack. However, correlation attacks and their faster variants are more efficient than DFT attacks in special scenerios where underlying combining function is not correlation immune. Incase of correlation immune combining functions, selective DFT attacks still provide propitious results and are advantageous over the correlation attacks. 

\section{Applicability of Fast Discrete Fourier Spectra Attacks on A5/1 Algorithm}
In this section applicability of fast discrete fourier spectra attacks on A5/1 algorithm is discussed. For clarity of context description of algorithm structure is given first followed by discussion on possibility of selective DFT  attacks on it.
\subsection{Description of A5/1}
A5/1 is a stream cipher built on a clock controlled combiner generator. Being a Global System for Mobile Communications (GSM)  encryption algorithm, it has been intensively  analyzed and is considered weak becuase of number of succesful attacks now. Details can be found in~\cite{barkan2006conditional}, ~\cite{ekdahl2003another}, and~\cite{gendrullis2008real}. The keystream generator consists of three LFSRs, $R1,R2$ and $R3$ of lengths 19, 22 and 23 respectively as shown in figure-3. 
\begin{figure}[h!]\label{fig:a5}
	  \centering
		      \includegraphics[scale=0.87]{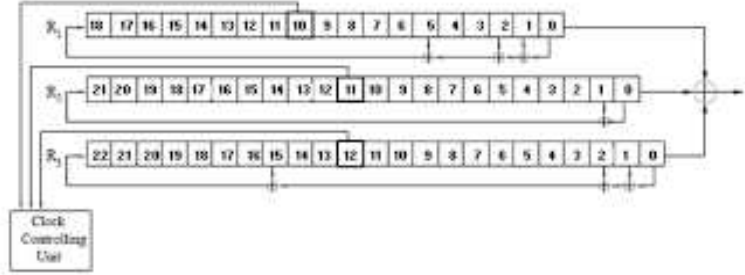}
	  \caption{Design of A5/1 Algorithm}
	\end{figure}

% LFSR-1 524287
% LFSR-2 4194303  = 2199018536961
% LFSR-3 8388607 = 18446702292280803327 (LCM)

The taps of three LFSRs correspond to primitive polynomials $x^{19}+x^{5}+x^{2}+x+1$, $x^{22}+x^{1}+1$ and $x^{23}+x^{15}+x^{2}+x+1$ and therefore, LFSRs produce maximum periods. The registers are clocked irregularly based on decision of a majority function having input of  clocking bits 8, 10 and 10 of registers $R1,R2$ and $R3$ respectively. It is a type of stop and go clocking where those LFSRs are clocked whose most significant bit (msb) matches to output bit of the majority function. At each clock cycle either two or three registers are clocked, and that each register is moved with probability $3/4$ and stops with probability $1/4$. The running key of $A5/1$ is obtained by XORing of the output of the three LFSRs. The process of keystream generation is as follows:-
\begin{enumerate}
\item Starting with key initialization phase, all LFSRs are initialized to zero first. 64-bit session key $K = (k_0 , . . . , k_{63})$ and publically known 22-bit frame number serve as initialization vector. In this phase all three registers are clocked regularly for $86$ cycles during which the key bits followed by frame bits are xored with the least significant bits of all three registers consecutively.

 \item  During second phase, the three registers are clocked for $100$ additional cycles
with the irregular clocking, but the output is discarded.
\item Finally, the generator is clocked for $228$  clock cycles with the irregular clocking producing the $228$ bits that form the keystream. 114 of them are used to encrypt uplink traffic from $A$ to $B$, while the remaining $114$ bits are used to decrypt downlink traffic from $B$ to $A$. A GSM conversation is sent as a sequence of frames, where one frame is sent every $4.6$ ms and contains $114$ bits. Each frame conversation is encrypted by a new session key $K$.

\end{enumerate}
\subsection{DFT Attacks on A5/1}
Applicablity of selective DFT attacks is preconditioned with following two separate cases:-
\begin{enumerate}
\item Case 1. If the minimal polynomial of output keystream $z_t$ is reducible, algorithm 1 described in~\cite{gong2011fast} is directly applicable.
\item Case 2. If another sequence $y_t \in GF(2^n)$ is determined such that $v_t = z_t * y_t$, where $*$ is a term wise product, and $L(v_t) +L( y_t) \leq L(z_t)$ and $L(v_t + y_t)+L(y_t)$ $\leq L(z_t)$, algorithm 2 described in~\cite{gong2011fast} is applicable on $z_t$. 

\end{enumerate}
Our analysis reveals that selective DFT attacks on A5/1 algorithm are not applicable. Detailed results are being published at other forum shortly 
    
 \subsection{DFT Attacks on E0 Cipher}
Selective DFT attacks on E0 cipher are possible with modifications in equations derived in~\cite{armknecht2002linearization}. Our results on E0 cipher are being published at some other forums shortly.
\section{Conclusion}
In this report, we presented a transform domain analysis of LFSR based sequence generators. The inherent peculiarities of the LFSR product sequences were evoked through novel patterns identified with the help of a CRT based approach. These findings were then extended to the filter generators and more particularly to the combiner generators. An effort was made to establish the mapping of different operations from time domain to frequency domain. Novel results on fixed shift patterns of LFSRs, their relationship to cyclic structures in  finite fields and CRT based interpretation of these patterns have been exploited to reduce the computations required in the last stage of DFT spectral attacks attacks on combiner generators. Subsequent to the transform domain analysis of basic components of stream ciphers and discussion on applicability of fast discrete fourier attacks on A5/1 algorithm, DFT based analysis of combiners resistant to correlation attacks are considered as interesting cases for their analysis in frequency domain and some initial results have shown good promise in this regard.  

%----------------------------------------------------------------------------------------
%	REFERENCE LIST
%----------------------------------------------------------------------------------------

\bibliographystyle{plain}
\bibliography{asad}

\end{document}